\newcommand{\R}{{\mathbb R}}
\newcommand{\HD}{{\mathrm{HD}}}
\newcommand{\eps}{\varepsilon}
\newcommand{\conv}{{\mathrm{conv}}}
\newtheorem{theorem}{Theorem}
\newtheorem{lemma}[theorem]{Lemma}
\newtheorem{problem}{Problem}
\newtheorem{observation}[theorem]{Observation}
\title{A variant of the Hadwiger--Debrunner $(p,q)$-problem\\ in the plane}
\date{}
\author{Sathish Govindarajan\thanks{\texttt{gsat@csa.iisc.ernet.in}. Indian Institute of Science, Bangalore, India.} \and Gabriel Nivasch\thanks{\texttt{gabrieln@ariel.ac.il}. Ariel University, Ariel, Israel.}}
\begin{document}
\maketitle

\begin{abstract}
Let $X$ be a convex curve in the plane (say, the unit circle), and let $\mathcal S$ be a family of planar convex bodies, such that every two of them meet at a point of $X$. Then $\mathcal S$ has a transversal $N\subset\R^2$ of size at most $1.75\cdot 10^9$.

Suppose instead that $\mathcal S$ only satisfies the following ``$(p,2)$-condition": Among every $p$ elements of $\mathcal S$ there are two that meet at a common point of $X$. Then $\mathcal S$ has a transversal of size $O(p^8)$. For comparison, the best known bound for the Hadwiger--Debrunner $(p, q)$-problem in the plane, with $q=3$, is $O(p^6)$.

Our result generalizes appropriately for $\R^d$ if $X\subset \R^d$ is, for example, the moment curve.
\end{abstract}

\section{Introduction}

Let $\mathcal S$ be a family\footnote{Throughout this paper we allow $\mathcal S$ to be a multi-set; meaning, the elements of $\mathcal S$ need not be pairwise distinct.} of convex bodies in $\R^d$. We say that $\mathcal S$ satisfies the \emph{$(p,q)$-condition}, for positive integers $p\ge q$, if among every $p$ elements of $\mathcal S$ there are $q$ that meet at a common point. Hadwiger and Debrunner~\cite{HD}, in their celebrated problem, asked whether a family $\mathcal S$ that satisfies the $(p,q)$-condition, for $p\ge q\ge d+1$, has a transversal of size bounded by a constant $\HD_d(p,q)$ that depends only on $d$, $p$, and $q$. (A \emph{transversal} for $\mathcal S$ is a set $N\subset\R^d$ that intersects every element of $\mathcal S$.)

This problem is a generalization of Helly's theorem~\cite{Helly}: Helly's theorem states that, if every $d+1$ elements of $\mathcal S$ intersect, then they all intersect; or, in other words, $\HD_d(d+1,d+1)=1$.

It is clear that $q$ cannot be smaller than $d+1$, since a family of $n$ hyperplanes in general position provides a counterexample: Every $d$ hyperplanes intersect, and yet a transversal must contain at least $n/d$ points.

Hadwiger and Debrunner~\cite{HD} showed that, for $q > 1+\allowbreak(d-1) p / d$, one has $\HD_d(p,q)=p-q+1$.

Alon and Kleitman~\cite{AK} settled the general question in the affirmative, by tackling the hardest case $q=d+1$. Their proof uses an impressive array of tools from discrete geometry, including the fractional Helly theorem, linear-programming duality, and weak epsilon-nets. (Alon and Kleitman later published a more elementary proof in \cite{AK_easy}.)

\paragraph{Fractional Helly.} The fractional Helly theorem~\cite{Kalai_frac,KL} (see also~\cite[pp. 195]{mat_DG}) states that, if $\mathcal S$ is a family of $n$ convex bodies in $\R^d$ such that at least an $\alpha$-fraction of the $\binom{n}{d+1}$ $(d+1)$-tuples intersect, then there exists a point $z\in\R^d$ contained in at least $\beta n$ bodies, for some $\beta>0$ that depends only on $d$ and $\alpha$. The bound $\beta \ge \alpha/(d+1)$ is asymptotically optimal for small $\alpha$.

\paragraph{Weak epsilon nets.} Given a finite point set $P\subset\R^d$ and a parameter $0<\eps<1$, a \emph{weak $\eps$-net} for $P$ (with respect to convex sets) is a set $N\subset\R^d$ that intersects every convex set that contains at least an $\eps$-fraction of the points of $P$. Alon et al.~\cite{ABFK} showed that $P$ always has a weak $\eps$-net of size bounded only by $d$ and $\eps$. The best known bounds for the size of weak $\eps$-nets are $f_2(\eps) = O(\eps^{-2})$ in the plane~\cite{ABFK, CEGGSW}, and $f_d(\eps) = O(\eps^{-d} \mathrm{polylog}(1/\eps))$ for dimension $d\ge 3$~\cite{CEGGSW, MW}.

For point sets $P$ that satisfy additional constraints, better bounds are known. For example, if $P\subset X$ for some convex curve $X\subset \R^2$, then $P$ has a weak $\eps$-net of size $O((1/\eps)\alpha(1/\eps))$, where $\alpha(n)$ denotes the very slow-growing inverse-Ackermann function (Alon et al.~\cite{AKNSS}).

Regarding lower bounds, Bukh et al.~\cite{BMN_weak} constructed, for every $d$ and $\eps$, a point set $P\subset\R^d$ for which every weak $\eps$-net has size $\Omega((1/\eps) \log^{d-1}(1/\eps))$.

\paragraph{Back to the Hadwiger--Debrunner problem.} The argument of Alon and Kleitman~\cite{AK} yields $\HD_d(p, d+1) \le f_d(c_d p^{-(d+1)})$, where $f_d$ is the upper bound for weak epsilon-nets, and $c_d>0$ is some constant. Thus, for the planar case we obtain $\HD_2(p, 3)  = O(p^6)$. 

The lower bound $\HD_d(p, d+1) = \Omega(p \log^{d-1} p)$ follows from the lower bound for weak epsilon-nets: Let $P\subset\R^d$ be a point set realizing the lower bound for weak $\eps$-nets. Let $\mathcal S$ be the set of all convex hulls of at least an $\eps$-fraction of the points of $P$. Then $\mathcal S$ satisfies the $(p,d+1)$-condition for $p = 1+d/\eps$; and every transversal for $\mathcal S$ is a weak $\eps$-net for $P$.

\paragraph{Related work.}
Many variants of the $(p,q)$-problem have been studied; see for example the survey~\cite{Eckhoff}.

Regarding the case $q=2$, Danzer~\cite{danzer} (answering a question of Gallai) showed that any family of pairwise intersecting disks in the plane (i.e., satisfying the $(2,2)$-condition) has a transversal of size $4$, and that this bound is optimal.
More generally, Gr\"unbaum~\cite{grunbaum} showed that any family of pairwise intersecting homothets of a fixed convex body in $\R^d$ has a transversal bounded in terms only of $d$.

Kim et al.~\cite{kim}, together with Dumitrescu and Jiang~\cite{DJ}, showed that for homothets of a convex body in $\R^d$
having the $(p,2)$-property, the transversal number is at most $c_d p$ for some constants $c_d$.

\subsection{Our variant of the problem}

We asked ourselves the following question: Can we obtain smaller transversals for $\mathcal S$ if we impose an additional constraint in $\mathcal S$, analogous to the convex-curve constraint for weak epsilon-nets?

In this spirit, we raised the following problem: Let $X$ be a convex curve in the plane (say, $X$ could be the unit circle $\{(x, y) \mid x^2+y^2=1\}$). Let $\mathcal S$ be a family of planar convex bodies as before. We now strengthen the $(p,q)$-condition by requiring that, among every $p$ elements of $\mathcal S$, at least $q$ meet at a point of $X$. What can we say then about the minimum size of a transversal for $\mathcal S$?

\begin{problem}
Let $X$ be a convex curve in the plane, and let $\mathcal S$ be a family of planar convex bodies, such that among every $p$ elements of $\mathcal S$, three of them meet at a point of $X$. Then we know that $\mathcal S$ has a transversal of size $\HD_2(p,3) = O(p^6)$. Does $\mathcal S$ have a smaller transversal?
\end{problem}

Since the conterexample that required $q\ge 3$ does not hold in this new setting, we can ask what happens when $q=2$.

\begin{problem}
Now suppose that among every $p$ elements of $\mathcal S$, two of them meet at a point of $X$. Does $\mathcal S$ then have a transversal of size depending only on $p$?
\end{problem}

We do not know the answer to the first question, but we answer the second question in the affirmative:

\begin{theorem}\label{thm_hd_conv}
Let $X$ be a convex curve in the plane, and let $\mathcal S$ be a family of planar convex bodies. Then:
\begin{enumerate}
\item[(a)] If every pair of elements of $\mathcal S$ meet at a point of $X$, then there exists a point $z\in\R^2$ that intersects at least a $1/15800$-fraction of the elements of $\mathcal S$, and $\mathcal S$ has a transversal of size at most $1.75\cdot 10^9$.

\item[(b)] If among every $p$ elements of $\mathcal S$, two of them meet at a point of $X$, then there exists a point $z\in\R^2$ that intersects a $\Omega(p^{-4})$-fraction of the elements of $\mathcal S$, and $\mathcal S$ has a transveral of size $O(p^8)$.
\end{enumerate}
\end{theorem}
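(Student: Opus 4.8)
\emph{Proof proposal.}

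First I would replace every body by a canonical inscribed body and isolate a single ``fractional‑Helly'' statement from which everything else follows. For $S\in\mathcal S$ put $A_S:=S\cap X$ and $C_S:=\conv(A_S)\subseteq S$. If $S,T\in\mathcal S$ meet at a point $z\in X$ then $z\in A_S\cap A_T$; hence in case (a) the traces $\{A_S\}$ pairwise intersect \emph{on $X$}, and in case (b) among every $p$ of them two intersect on $X$. Since $X$ is a convex curve one has $C_S\cap X=A_S$, and as $C_S\subseteq S$, a point in many $C_S$ lies in many $S$ and a transversal of $\{C_S\}$ transverses $\mathcal S$. I would then claim that the whole theorem reduces to: there is $\beta>0$ (an absolute constant in case (a), $\beta=\Omega(p^{-4})$ in case (b)) such that \emph{every} subfamily $\mathcal G\subseteq\{C_S\}$ contains a point of $\R^2$ lying in at least $\beta\lvert\mathcal G\rvert$ of its members. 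This is hereditary because the hypothesis on $\mathcal S$ is, so it feeds directly into the Alon--Kleitman machinery: clearing denominators in an optimal fractional matching (this is exactly where one uses that $\mathcal S$ may be a multiset) the depth property forces the fractional transversal number to be at most $1/\beta$, and a weak $\eps$-net with $\eps=\Theta(\beta)$ for the associated weighted planar point set then gives an integral transversal of size $O(f_2(\beta))=O(\beta^{-2})$, using the planar bound $f_2(\eps)=O(\eps^{-2})$. In case (a) the explicit constants render this as at most $1.75\cdot10^9$; in case (b) it is $O(p^8)$. The ``deep point'' halves of parts (a) and (b) are just the isolated statement applied to $\mathcal G=\{C_S\}$.

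Next I would attack the depth statement. Fix a point $O$ in the interior of the region bounded by $X$ (for an unbounded convex curve this is vacuous) and split the family into $\{S:O\in C_S\}$, where the single point $O$ already works, and the rest; we may assume the rest, so each $A_S$ lies in one of the two arcs cut off by some line through $O$, hence in a ``short'' arc $J_S$ whose endpoints lie in $A_S$, and the $J_S$ pairwise intersect. Fixing one of them, $[a,b]$, every $J_S$ contains $a$, contains $b$, or is contained in the open arc $(a,b)$; the last family behaves like intervals on a line and so has a common point $c$ by one‑dimensional Helly, so one of $a,b,c$ — call it $w$ — lies in $J_S$ for a third of all $S$. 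For those $S$ the chord of $C_S$ joining the endpoints of $J_S$ separates $w$ from $O$, so the segment $[w,O]$ meets $C_S$. Passing to this subfamily (a further constant loss, polynomial in $p$ in case (b)), the problem becomes: inscribed bodies $C_S$ all meeting a fixed segment, with traces pairwise intersecting on $X$ — exhibit a point of depth $\Omega(\lvert\mathcal S\rvert)$.

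The crux — and the source of the explicit constants — is this last step, and it is the part I expect to be the main obstacle. The naive move, projecting onto the segment (or any line $\ell$) and invoking one‑dimensional Helly on the intervals $C_S\cap\ell$, fails: two inscribed bodies can meet \emph{only on the curve}, far from $\ell$, so the intervals $C_S\cap\ell$ need not pairwise intersect. My plan is instead: (i) apply one‑dimensional Helly to the parameter‑intervals $[\min A_S,\max A_S]$, which do pairwise intersect, to obtain a common parameter $t^\ast$; (ii) if $t^\ast\in A_S$ for a $\beta$‑fraction of $S$, then $\gamma(t^\ast)$ is the wanted deep point; (iii) otherwise $t^\ast$ lies in a gap of $A_S$ for most $S$, so $A_S$ splits into a left part $A_S^-$ and a right part $A_S^+$, and for every pair $S,T$ one has $A_S^-\cap A_T^-\neq\emptyset$ or $A_S^+\cap A_T^+\neq\emptyset$; pass to the denser side, observe that the intersection graph of the corresponding parameter‑intervals is now a \emph{dense} interval graph and therefore has large clique number, and use the lower‑hull structure of the $C_S$ to produce a common parameter further in. The genuine difficulty is to arrange that this recursion closes after a \emph{bounded} number of rounds rather than $\Theta(\log n)$, so that only a constant fraction of the family is lost; that bookkeeping is exactly what pins down $\beta$, and hence the numbers $1/15800$, $1.75\cdot10^9$ and the exponent $8$. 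For case (b) every combinatorial step above — the pigeonhole among $a,b,c$, the denser‑side choice, the depth bound for interval graphs — is run with Tur\'an's theorem replacing ``pairwise'', and that is where the powers of $p$ accumulate.
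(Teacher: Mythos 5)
Your outer shell is the same as the paper's and is fine: Tur\'an-type densification for the $(p,2)$-condition, a fractional-Helly-type ``deep point'' statement for families whose pairwise intersections lie on $X$ (stated so that it applies to multisets, which is what the Alon--Kleitman duplication step needs), LP duality to bound the fractional transversal number by $1/\beta$, and then a planar weak $\eps$-net of size $O(\eps^{-2})$; the reduction to the inscribed bodies $C_S=\conv(S\cap X)$ is harmless though unnecessary. The problem is that the entire content of the theorem sits in the deep-point statement, and your proposal does not prove it. Your sketch --- take a common parameter $t^\ast$ of the intervals $[\min A_S,\max A_S]$, split each trace into $A_S^-$ and $A_S^+$, ``pass to the denser side,'' and recurse --- has two concrete defects. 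First, after one split the hypothesis degrades from ``pairwise intersecting on $X$'' to ``at least half the pairs intersect on one side'': different pairs $\{S,T\}$ may meet on different sides of $t^\ast$, so no subfamily of bodies inherits the pairwise property, only a density property, and each further round degrades the density again; the multiplicative losses are then not a constant (nor $p^{O(1)}$) unless the recursion terminates after boundedly many rounds. Second, the recursion has no progress measure: a large clique in the interval graph of the parameter hulls only reproduces the starting configuration (intervals with a common parameter), since a common parameter of $[\min A_S,\max A_S]$ gives no point common to the bodies; ``use the lower-hull structure to produce a common parameter further in'' is exactly the step that is missing. You acknowledge this yourself (``the genuine difficulty is to arrange that this recursion closes after a bounded number of rounds''), and that unproved bookkeeping is precisely where the constants $1/15800$ and the exponent $4$ (hence $8$) come from, so the proposal cannot be credited with the theorem.

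For comparison, the paper avoids any such recursion. For every intersecting pair it fixes a witness point $p_{ij}\in X\cap S_i\cap S_j$, sorts all $N=\binom n2$ (or $\gamma\binom n2$) witnesses along the curve, and inserts separator points between consecutive witnesses. The geometric engine is the observation that if a quadruple of separators $(y_a,y_b,y_c,y_d)$ ``pierces'' color $i$ --- each of the four arcs it cuts contains a witness involving $S_i$ --- then the intersection of the diagonals $y_ay_c\cap y_by_d$ lies in $S_i$. A counting argument (a degree-sum/Cauchy--Schwarz bound applied to a graph on the three short intervals carrying each non-spread-out color) shows that all but a small fraction of colors are ``spread out,'' i.e.\ have four witnesses at pairwise distance $\ge\alpha N$ along the list, and a random separator quadruple then pierces each spread-out color with probability $\Omega(\alpha^3)$; choosing $\alpha=0.027$ in case (a) and $\alpha=\gamma/300$ with $\gamma\ge 1/p$ in case (b) yields the claimed $1/15800$ and $\Omega(p^{-4})$ depth, after which the Alon--Kleitman step you described finishes exactly as in your outline. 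You would need to supply an argument of this strength (or complete your recursion with an explicit bounded-round analysis) before the proposal constitutes a proof.
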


A generalization of Theorem~\ref{thm_hd_conv} for $\R^d$ is discussed in Section~\ref{sec_Rd}.

\section{The proof}

The first step (for case (\emph{b}) only) is to apply Tur\'an's theorem~\cite{turan} (see also~\cite{thebook}):

\begin{lemma}\label{lemma_many_pairs}
Let $X$ be a convex curve in the plane, and let $\mathcal S$ be a family of $n$ planar convex bodies, such that among every $p$ elements of $\mathcal S$, two of them meet at a point of $X$. Then, the number of pairs of elements of $\mathcal S$ that meet at a point of $X$ is at least $n^2/(2p)$.
\end{lemma}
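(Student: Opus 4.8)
The plan is to discard the geometry entirely and reduce the statement to a purely graph-theoretic fact that follows directly from Turán's theorem; the convex curve $X$ and the convexity of the bodies play no role whatsoever in this particular step (they only enter in the later stages of the argument). Define a graph $G$ on the vertex set $\mathcal S$ by joining two bodies with an edge exactly when they meet at a common point of $X$ (for a multi-set, two distinct copies of the same body are joined iff that body meets $X$). Then the hypothesis — that among every $p$ elements of $\mathcal S$ some two meet at a point of $X$ — says precisely that $G$ has no independent set of size $p$; equivalently, the complement graph $\overline G$ contains no clique $K_p$.

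Now apply Turán's theorem to $\overline G$: an $n$-vertex $K_p$-free graph has at most $\bigl(1-\tfrac{1}{p-1}\bigr)\tfrac{n^2}{2}$ edges. Since the number of pairs of elements of $\mathcal S$ meeting at a point of $X$ is exactly the number of edges of $G$, namely $\binom n2 - e(\overline G)$, we get
\[
e(G) \;\ge\; \binom n2 - \Bigl(1-\frac{1}{p-1}\Bigr)\frac{n^2}{2} \;=\; \frac{n^2}{2(p-1)} - \frac n2 .
\]

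It then remains only to simplify this lower bound to the tidier $n^2/(2p)$ claimed in the statement. A one-line computation shows $\frac{n^2}{2(p-1)} - \frac n2 \ge \frac{n^2}{2p}$ whenever $n \ge p(p-1)$; for $n < p$ the hypothesis is vacuous and for $n$ in the intermediate range the inequality is irrelevant to how the lemma is used (where $n$ may be taken as large as we wish, since we are only after asymptotic transversal bounds). There is essentially no obstacle in this proof — the "hard part'' is nothing more than being slightly careful with the lower-order term $-n/2$ when trading the sharp Turán bound $\frac{n^2}{2(p-1)}$ for the cleaner $\frac{n^2}{2p}$.
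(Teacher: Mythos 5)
Your proof is essentially the paper's own argument: the paper builds the graph of pairs that do \emph{not} meet on $X$ (your $\overline G$), notes it is $K_p$-free, applies Tur\'an's theorem, and counts the missing edges. Your explicit handling of the $-n/2$ term is in fact slightly more careful than the paper, which silently absorbs it (the clean bound $n^2/(2p)$ indeed needs $n\ge p(p-1)$, which is harmless in the asymptotic application).
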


\begin{proof}
Let $G$ be a graph containing a vertex for every element of $\mathcal S$, and containing an edge for every pair of elements that \emph{do not} meet at any point of $X$. Then our assumption on $\mathcal S$ is equivalent to saying that $G$ contains no clique of size $p$. Therefore, by Tur\'an's theorem, $G$ contains at most $\left(1 - \frac{1}{p-1}\right)\frac{n^2}{2}$ edges, so it is missing more than $n^2/(2p)$ edges.
\end{proof}

The second, and main, step is to prove a fractional-Helly-type lemma:

\begin{lemma}\label{lemma_frac_H_type}
Let $X$ be a convex curve in the plane, and let $\mathcal S$ be a family of $n$ planar convex bodies. Then:
\begin{itemize}
\item[(a)] If every pair of elements in $\mathcal S$ meet at a point of $X$, then there exists a point $z\in\R^2$ that is contained in at least $n/15800$ elements of $\mathcal S$.
\item[(b)] If a $\gamma$-fraction of the $\binom{n}{2}$ pairs of elements of $\mathcal S$ meet at a point of $X$, for some $0<\gamma<1$, then there exists a point $z\in\R^2$ that is contained in at least $\Omega(\gamma^4 n)$ elements of $\mathcal S$.
\end{itemize}
\end{lemma}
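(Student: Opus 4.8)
The plan is to reduce the statement to a combinatorial fact about inscribed convex polygons and then to extract an interior point of large depth. I may assume $X$ is strictly convex (flat pieces can be handled by a limiting argument), say $X=S^1$, with $R$ the unit disk and $O$ the origin. For each pair $(i,j)$ of elements that meet at a point of $X$ fix a witness $x_{ij}\in S_i\cap S_j\cap X$, and replace $S_i$ by $T_i:=\conv\{x_{ij}: (i,j)\text{ is such a pair}\}\subseteq S_i$. Since $T_i\subseteq S_i$, a point lying in many $T_i$ lies in many $S_i$; and since all the points of $P_i:=\{x_{ij}:j\}$ lie on the strictly convex curve $X$, every $x_{ij}$ is a \emph{vertex} of $T_i$. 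Hence ``$S_i$ and $S_j$ meet at a point of $X$'' is equivalent to ``$T_i$ and $T_j$ share a vertex'' (they share $x_{ij}$). In case (a) every pair of the $T_i$ shares a vertex; in case (b), $\gamma\binom n2$ pairs do, and after discarding vertices of small degree in this ``shared-vertex graph'' (this first pruning already costs a factor $\gamma$) we keep $\Omega(\gamma n)$ polygons with comparatively high degree.

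Next I would run a dichotomy relative to $O$. Let $\mathcal A=\{i:O\in T_i\}$. If $|\mathcal A|$ is a large enough fraction of the surviving polygons we are done with $z=O$. Otherwise almost all surviving $T_i$ avoid $O$, hence each $P_i$ lies in an open half-circle; let $A_i\subseteq X$ be the shortest arc containing $P_i$, so $|A_i|<\pi$. If $T_i$ and $T_j$ share a vertex then $A_i\cap A_j\neq\emptyset$. Thus in case (a) the arcs $A_i$ pairwise intersect, and a short Helly-type argument for arcs of length $<\pi$ (classify the other arcs by whether they contain the left endpoint of a fixed $A_1$, contain its right endpoint, or lie strictly inside it — the last class is pairwise intersecting on an interval, hence has a common point) produces a point $a\in X$ lying in at least a $1/3$ fraction of them; in case (b) the corresponding fractional-Helly statement for arcs gives a point $a$ in an $\Omega(\gamma)$ fraction of them. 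Restrict to this subfamily. For each such $i$, $a$ lies strictly between two consecutive vertices $p_i^-,p_i^+$ of $T_i$, and the corresponding chord $\sigma_i=[x(p_i^-),x(p_i^+)]$ spans an arc of length $<|A_i|<\pi$ containing $a$, hence separates $x(a)$ from $O$.

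The final, and I expect hardest, step is to convert this into one interior point of high depth. Since $\sigma_i$ separates $x(a)$ from $O$, the radius $[O,x(a)]$ crosses $\sigma_i$ at some depth $\delta_i\in(0,1)$, and by convexity $T_i\cap[O,x(a)]$ is a subsegment $[\delta_i,\delta_i']$ of the radius with $0<\delta_i<\delta_i'<1$; a point at depth $\delta$ on the radius lies in $T_i$ exactly when $\delta\in[\delta_i,\delta_i']$. So I want a depth $\delta^{*}$ contained in many of the intervals $[\delta_i,\delta_i']$. The obstacle is that a priori these intervals could be spread along $(0,1)$ with small overlap; forcing them to cluster seems to require using the convex-position structure once more. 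I would split the subfamily into a bounded number of classes according to the scale of $\ell_i=|p_i^+-p_i^-|$ (equivalently, roughly, of $\delta_i$), note that inside a class the chords $\sigma_i$ are comparable at $a$, and then choose $\delta^{*}$ just beyond the common lower scale of a popular class, checking — using convexity of the $T_i$ and the fact that their other gap-chords lie to one side of $a$ — that $\delta^{*}\le\delta_i'$ for a constant fraction of that class. Multiplying the constant losses from the centre dichotomy, the arc-Helly step, and this bucketing yields the explicit constant $1/15800$ in part (a); carrying the factor $\gamma$ through the fractional-Helly-for-arcs step and the degree-pruning (and the analogous constant-loss pigeonhole at the end) yields a point in $\Omega(\gamma^{4}n)$ of the bodies in part (b). Together with Lemma~\ref{lemma_many_pairs} and the planar weak $\eps$-net bound $f_2(\eps)=O(\eps^{-2})$ this gives the transversal bounds of Theorem~\ref{thm_hd_conv}.
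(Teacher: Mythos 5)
Your reduction to inscribed ``witness polygons'' $T_i$ and the dichotomy about the centre $O$ are fine, and the arc--Helly step (a point $a$ in a constant, resp.\ $\Omega(\gamma)$, fraction of the arcs $A_i$) is standard. The genuine gap is the final step, and it is not just a missing computation. First, the pigeonhole into ``a bounded number of classes according to the scale of $\ell_i$'' is invalid: the gap lengths $\ell_i$ can occupy arbitrarily many scales (e.g.\ $\ell_i$ decreasing doubly exponentially), so the ``popular class'' need not contain any constant fraction of the subfamily. Second, and more fundamentally, at that point of your argument the only information you have retained about the subfamily is that each arc $A_i$ contains $a$ and that the polygons pairwise share \emph{some} vertex --- but those shared vertices may all lie far from $a$ and impose no local structure at $a$ whatsoever. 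Concretely, if the two vertices of $T_i$ nearest to $a$ sit at angular distance about $\eps_i$ on either side of $a$ and the rest of $T_i$ hangs off near a far hub, then $T_i\cap[O,x(a)]$ is an interval of depths roughly $[1-\eps_i,\,1-\eps_i^2/2]$; choosing $\eps_{i+1}<\eps_i^2/2$ makes these intervals pairwise disjoint, so no point of the radius $[O,x(a)]$ lies in more than one $T_i$. Such configurations are perfectly compatible with everything you have assumed by that stage (the arcs all contain $a$, the pairwise witnesses live near the hub), so the claim that ``$\delta^*\le\delta_i'$ for a constant fraction of that class'' cannot be checked from convexity alone: the approach can only work if the point $a$ is chosen using the global structure of the pair-witnesses, not merely as the output of a Helly argument on the arcs, and no mechanism for that is given.

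That missing mechanism is exactly where the paper puts all the work: it sorts \emph{all} witness points along $X$, pierces a colour by a quadruple of separator points (the diagonal intersection lies in $S_i$ whenever each of the four induced intervals contains a point of colour $i$), and the heart of the proof is the bound on the number of colours that are \emph{not} spread out, obtained by the degree-counting/Cauchy--Schwarz argument of Observation~\ref{obs_graph}: too many non-spread-out colours would force some interval of length $3\alpha N$ to contain more point--colour incidences than it has room for. Your sketch has no counterpart to this counting step, and without it (or a substitute that ties the choice of $a$ to the distribution of the witnesses) the derivation of the constant $1/15800$ in (a) and of the $\Omega(\gamma^4 n)$ bound in (b) does not go through.
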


\begin{proof}
Let $S_1, S_2, \ldots, S_n$ be the objects in $\mathcal S$. We think of each set $S_i$ as ``colored" with color $i$. For each pair $S_i$, $S_j$ that meet at a point of $X$, select a point $p_{ij} \in X \cap S_i \cap S_j$. In case (\emph{a}) we have $N=\binom{n}{2}$ points $p_{ij}$, while in case (\emph{b}) we have $N=\gamma \binom{n}{2}$ points. Note that these points are not necessarily pairwise distinct (in fact they could all be the same point); however, that would only make our problem easier.

Sort the points $p_{ij}$ in weakly circular order around $X$, and rename the sorted points $Q = (q_0, q_1, \ldots, q_{N-1})$. We treat $Q$ as a circular list, so after $q_{N-1}$ comes $q_0$. Each $q_a$ is colored with two distinct colors among $1, \ldots, n$ (corresponding to the two objects in $\mathcal S$ that defined $q_a$), and each pair of colors occurs at most once (or exactly once in case (\emph{a})).

Let $Y = (y_0, \ldots, y_{N-1}) \subset X$ be a circular list of ``separator" points, such that $y_i$ lies (weakly) between $q_{i-1}$ and $q_i$ for every $i$.

Note that each quadruple of separator points $y=(y_a, y_b, y_c, y_d)$ (listed in circular order) defines a partition of $Q$ into four intervals: $[q_a, q_{b-1}]$, $[q_b, q_{c-1}]$, $[q_c, q_{d-1}]$, $[q_d, q_{a-1}]$. The quadruple $y$ is said to ``pierce" color $i$ if each of these four intervals contains a point colored with color $i$.

We make use of the following observation, which was previously used in \cite{AKNSS} and \cite{CEGGSW}.

\begin{observation}\label{obs_pierce}
Let $y=(y_a, y_b, y_d, y_d)$ be a quadruple of separator points, and let $z\in\R^2$ be the point of intersection of segments $y_ay_c$ and $y_by_d$. Then, if $y$ pierces color $i$, then $z\in S_i$ (see Figure~\ref{figs} (left)).
\end{observation}

Our strategy is to show that a randomly-chosen quadruple of separators pierces, in expectation, a constant fraction of the colors.

Define the \emph{distance} between two points $q_a, q_b \in Q$ as $\min{\{(b-a)\bmod N, (a-b)\bmod N\}}$.

We now choose a parameter $\alpha<1$ independent of $n$: For case (\emph{a}) we set $\alpha = 0.027$, while for case (\emph{b}) we set $\alpha = \gamma/300$. We call a color $i$ \emph{spread out} if there exist four points $q_a, q_b, q_c, q_d\in Q$, colored with color $i$, such that all the pairwise distances between these four points are at least $\alpha N$.

\begin{observation}\label{obs_random}
A randomly-chosen quadruple $y = (y_a, y_b, y_c, y_d)$ has probability at least $24\alpha^3 (1-3\alpha)$ of piercing a given spread-out color.
\end{observation}

\begin{proof}
Suppose color $i$ is spread out. Consider four points $q_a, q_b, q_c, q_d\in Q$ in cyclic order, that prove that $i$ is spread out. Let the distances between them in cyclic order be $\beta_1 N$, $\beta_2 N$, $\beta_3 N$, $\beta_4 N$; so $\beta_1 + \cdots + \beta_4=1$. Then $y$ pierces color $i$ with probability at least $24\beta_1 \beta_2\beta_3 \beta_4$. Subject to the constraints $\beta_i \ge \alpha$ for $1\le i \le 4$, this quantity is minimized when $\beta_1 = \beta_2 = \beta_3 = \alpha$ and $\beta_4 = 1-3\alpha$.
\end{proof}

We now proceed to derive a lower bound on the number of spread-out colors.

First, we characterize when a color is spread out:

\begin{observation}\label{obs_3intervals}
For each color $i$, exactly one of the following two options holds:
\begin{enumerate}
\item Color $i$ is spread out.
\item All the instances of color $i$ occur in at most three intervals of $Q$, each of length at most $\alpha N$.
\end{enumerate}
\end{observation}

\begin{proof}
If the second condition is true then clearly color $i$ is not spread out, because we can at most choose $q_a$, $q_b$, and $q_c$ from three different intervals, and then we have no way of choosing $q_d$.

For the other direction, suppose color $i$ is not spread out. Let $I$ be the longest interval in $Q$ that is completely free of color $i$. We must certainly have $|I| > \alpha N$, since otherwise $Q$ would have $1/(2\alpha)$ points in cyclic order, with pairwise distances at least $\alpha N$, all colored with color $i$; and $1/(2\alpha)>4$.

To the left and right of $I$ are points $q_a$ and $q_b$, respectively, colored with color $i$. Let $q_{a'}$ be the farthest point left of $q_a$, still within distance $\alpha N$ of $q_a$, that is colored with color $i$. Similarly, let $q_{b'}$ be the farthest point right of $q_b$, still within distance $\alpha N$ of $q_b$, that is colored with color $i$. Let $q_c$ be the first point left of $q_{a'}$ colored with color $i$; and let $q_d$ be the first point right of $q_{b'}$ colored with color $i$.

\begin{figure}
\centerline{\includegraphics{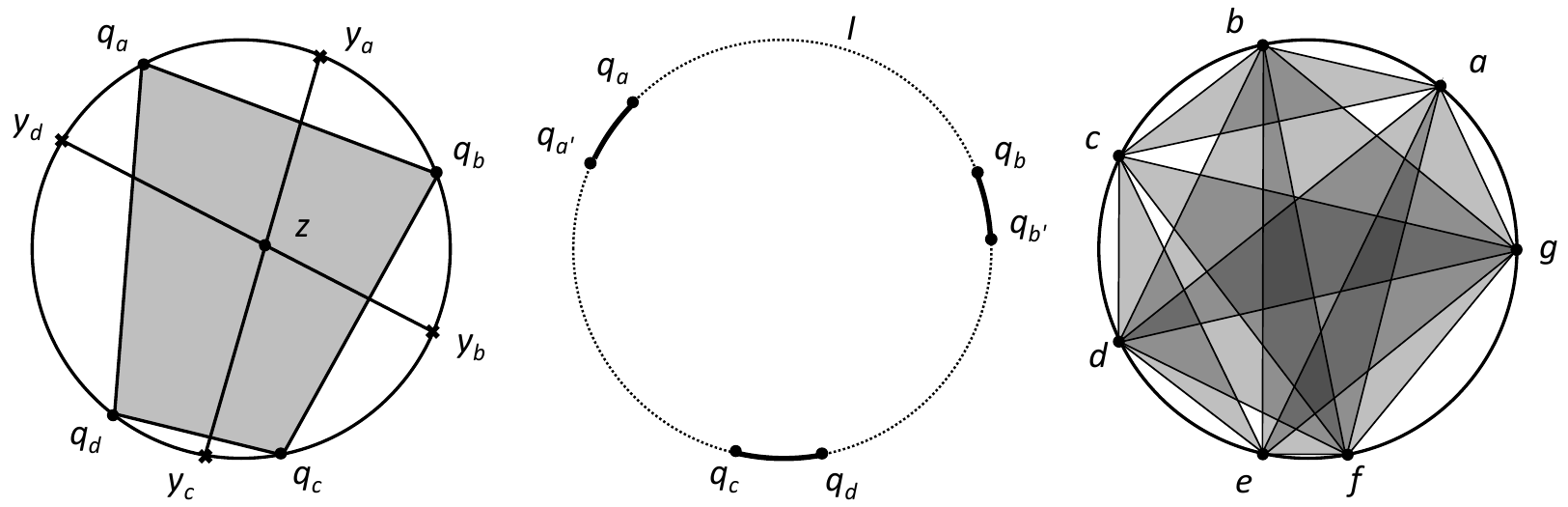}}
\caption{\label{figs}Left: If the quadruple of separators pierces color $i$, then $z\in S_i$. Center: If color $i$ is not spread out, then it is contained in three small intervals. Right: A family that requires a transversal of size $3$.}
\end{figure}

The distance between $q_c$ and $q_d$ must be less than $\alpha N$, since otherwise $q_c$, $q_a$, $q_b$, $q_d$ would prove that color $i$ is spread out. Thus, all instances of color $i$ are contained in the intervals $[q_d, q_c]$, $[q_{a'}, q_a]$, $[q_b, q_{b'}]$. See Figure~\ref{figs} (center).
\end{proof}

We now derive an upper bound on the number of colors that are \emph{not} spread out.

\begin{lemma}\label{lemma_not_spread_bd}
In case (a) the number of colors that are not spread out is at most $3\sqrt{3\alpha}n + o(n)$. In case (b) this number is at most $(1-\gamma/4)n + o(n)$.
\end{lemma}

\begin{proof}
We will use the following graph-theoretic observation:

\begin{observation}\label{obs_graph}
Let $G=(V, E)$ be a graph, and for every $v\in V$ let $g(v) = \sum_{w\in N(v)} d(w)$ denote the sum of the degrees of the neighbors of $v$.
Then, there exists a vertex $v\in V$ for which $g(v) \ge 4|E|^2/|V|^2$.
\end{observation}

\begin{proof}
We have $\sum_{v\in V} g(v) = \sum_{v\in V} d^2(v)$, since each vertex $v$ contributes exactly $d(v)$ to $d(v)$ different terms of $\sum g(v)$. Therefore, the claim follows by the Cauchy--Schwarz inequality, noting that $\sum_{v\in V} d(v) = 2|E|$.
\end{proof}

Let $m = kn$ be the number of colors that are not spread out. In case (\emph{b}) we may assume that $k>1-\gamma/2$, since otherwise we are done.

Assume for simplicity that the non-spread-out colors are $1, \ldots, m$. For each color $i$, $i\le m$, let $I_{i1}, I_{i2}, I_{i3}$ be the three intervals $Q$, of length at most $\alpha N$, on which color $i$ occurs, according to Observation~\ref{obs_3intervals}.

Let $G=(V,E)$ be a graph with $3m$ vertices, labeled $v_{ia}$ with $1\le i\le m$ and $1\le a\le 3$, and with an edge connecting vertices $v_{ia}$ and $v_{jb}$ if and only there is a point $p_{ij}$, having the pair of colors $i$ and $j$, lying on the intervals $I_{ia}$ and $I_{jb}$. In case (\emph{a}) we have $|E|=\binom{m}{2}$.

In case (\emph{b}) we have $|E| \ge N - (1-k)n^2$, since the number of points $p_{ij}$ that have a spread-out color (meaning, that $i>m$ or $j>m$) is at most $n(n-m) = (1-k)n^2$. Thus, $|E| \ge (k + \gamma/2-1)n^2$ ignoring lower-order terms. Note that this quantity is positive, by our assumption on $k$.

Denote by $g(v) = \sum_{w\in N(v)} d(w)$ the sum of the degrees of the neighbors of vertex $v\in V$. By Observation~\ref{obs_graph}, there exists a vertex $v_{ia}$ for which $g(v_{ia}) \ge 4|E|^2/|V|^2$.

Consider the interval $I_{ia}$ corresponding to this vertex $v_{ia}$. Recall that $I_{ia}$ has length at most $\alpha N$. Let $I'$ be an interval of $Q$ of length $3\alpha N$ centered around $I_{ia}$. All the intervals $I_{jb}$ that correspond to neighboring vertices $v_{jb} \in N(v_{ia})$ lie in $I'$. Each such $I_{jb}$ contains $d(v_{jb})$ points colored with color $j$. Thus, $I'$ contains at least $g(v_{ia})$ ``colorings" of points. But at most two ``colorings" happen at each point, so $|I'| \ge g(v_{ia})/2 \ge 2|E|^2/|V|^2$.

Therefore, $3\alpha N \ge 2|E|^2/|V|^2$. In case (\emph{a}) we substitute $|V| = 3m$, $|E| \approx m^2/2$, and $N \approx n^2/2$ (ignoring lower-order terms); we obtain $m \le 3\sqrt{3\alpha}n + o(n)$, as claimed.

In case (\emph{b}) we substitute $|V| = 3kn$, $|E| \ge (k+ \gamma/2 -1)n^2$, $N = \gamma n^2/2$, and $\alpha = \gamma/300$. Solving for $k$, we obtain
\begin{equation*}
k\le \frac{1-\gamma/2}{1-3\gamma/20}.
\end{equation*}
Since $0<\gamma<1$, this quantity is at most $1-\gamma/4$, completing the proof.
\end{proof}

Thus, the number of spread-out colors is at least $(1-3\sqrt{3\alpha})n - o(n)$ in case (\emph{a}), and $\Omega(\gamma n)$ in case (\emph{b}).

To conclude the proof of Lemma~\ref{lemma_frac_H_type}, we put together Observation~\ref{obs_random} and Lemma~\ref{lemma_not_spread_bd}. They give us a lower bound on the expected number of colors that are pierced by a randomly-chosen quadruple of separators. There must exist a quadruple $y=(y_a, y_b, y_c, y_d)$ that achieves this expectation.

In case (\emph{a}), the expectation is $24\alpha^3(1-3\alpha)(1-3\sqrt{3\alpha}) n - o(n)$. Since we chose $\alpha = 0.027$ (which is close to optimal), this is at least $n/15800$ for large enough $n$.
 
For case (\emph{b}) we note that the bound in Observation~\ref{obs_random} is $\Omega(\alpha^3)$, which is $\Omega(\gamma^3)$ by our choice of $\gamma$. Hence, $y$ pierces $\Omega(\gamma^4 n)$ colors.

In both cases, by Observation~\ref{obs_pierce}, the point of intersection $z = y_ay_c \cap y_by_d$ is the desired point. This completes the proof of Lemma~\ref{lemma_frac_H_type}.
\end{proof}

The final step is to apply the standard Alon--Kleitman machinery. We follow Matou\v sek's presentation in \cite{mat_DG}:

\begin{proof}[Proof of Theorem~\ref{thm_hd_conv}]
We recall some concepts. Given a finite family $\mathcal S$ of objects in $\R^d$, a \emph{fractional transversal} for $\mathcal S$ is a finite point set $N\subset\R^d$, together with a weight function $w:N\to [0,1]$, such that $\sum_{x\in N\cap S} w(x) \ge 1$ for each $S\in \mathcal S$. (A regular transversal is then a fractional transversal for which $w(x) = 1$ for all $x\in N$.) The size of the fractional transversal is defined as $\sum_{x\in N} w(x)$.

A \emph{fractional packing} for $\mathcal S$ is a weight function $\phi:\mathcal S\to [0,1]$, such that $\sum_{S\in \mathcal S:x\in S} \phi(S) \le 1$ for every point $x\in\R^d$. The size of the fractional packing is defined as $\sum_{S\in\mathcal S} \phi(S)$.

Since $\mathcal S$ has a finite number of elements, they define a partition of $\R^d$ into a finite number of regions. It does not matter which point we choose from each region, and therefore, there is only a finite number of points we have to consider.

The problems of minimizing the size of a fractional transversal of $\mathcal S$, and of maximizing the size of a fractional packing of $\mathcal S$, are both linear programs, and furthermore, they are duals of each another. Therefore, by LP duality, the size of their optimal solutions coincide (see also~\cite{LP}). We denote by $\tau^*(\mathcal S)$ the optimal size of the linear programs.

Now consider the family $\mathcal S$ given in Theorem~\ref{thm_hd_conv}. Recall that $\mathcal S$ satisfies our strengthened $(p,2)$-condition: Among every $p$ elements of $\mathcal S$, two meet at a point of $X$ (with $p=2$ in case (\emph{a})). We can assume that every element of $\mathcal S$ intersects $X$, since otherwise, the remaining elements would satisfy the $(p-1, 2)$-condition.

Let $\phi$ be a fractional packing for $\mathcal S$ achieving the optimal size $\tau^* = \tau^*(\mathcal S)$. We can assume that $\phi(S)$ is rational for every $S\in\mathcal S$. Write $\phi(S) = m(S)/D$, where $m(S)$ and $D$ are integers and $D$ is a common denominator. Then $\sum_{S\in\mathcal S} m(S) = \tau^* D$, and
\begin{equation}\label{eq_mS}
\sum_{S\in \mathcal S:x\in S} m(S) \le D \qquad \text{for every point $x\in \R^d$}.
\end{equation}

Define a family of objects $\mathcal T$ obtained by repeating each $S\in\mathcal S$ $m(S)$ times. Since $\mathcal S$ satisfies our strengthened $(p,2)$-condition, so does $\mathcal T$ (if among the $p$ elements we select two copies of the same object, then they clearly meet in $X$). Thus, by Lemmas~\ref{lemma_many_pairs} and~\ref{lemma_frac_H_type}, there exists a point $z\in \R^2$ contained in at least an $\eps$-fraction of the $\tau^* D$ objects in $\mathcal T$, where $\eps = 1/15800$ in case (\emph{a}) or $\eps = \Omega(p^{-4})$ in case (\emph{b}). On the other hand, equation (\ref{eq_mS}) implies that $z$ cannot intersect more than $D$ objects of $\mathcal T$. Hence, $\tau^* \le 1/\eps$.

By LP duality, this means that $\mathcal T$ has a fractional transversal $(N, w)$ of size at most $1/\eps$. As before, we can assume that all the weights in the fractional transversal are rational. We replace $N$ by an unweighted point set $N'$, in which each point of $x\in N$ is replaced by a tiny cloud of size proportional to $w(x)$. Then, each object in $\mathcal T$ (and thus, each object in $\mathcal S$) contains at least an $\eps$-fraction of the points of $N'$.

Finally, we take a weak $\eps$-net $M$ for $N'$. Since $M$ intersects every convex set that contains an $\eps$-fraction of the points of $N'$, $M$ is our desired transversal for $\mathcal S$. Its size is $f_2(\eps) = O(\eps^{-2})$, which in case (\emph{b}) is $O(p^8)$. For case (\emph{a}) we use the more explicit bound $f_2(\eps) \le 7\eps^{-2}$ of Alon et al.~\cite{ABFK}, and we get $|M| \le 1.75\cdot 10^9$, as claimed.\footnote{The bound of Alon et al.~can actually be improved to $f_2(\eps)\le 6.37\eps^{-2} + o(\eps^{-2})$ by simply optimizing the parameter involved in the divide-and-conquer argument. This would lead to a modest improvement in our bound for $|M|$.}
\end{proof}

\section{Generalization to $\R^d$}\label{sec_Rd}

\paragraph{Convex curves.} A \emph{convex curve} in $\R^d$ is a curve that intersects every hyperplane at most $d$ times~\cite[p.~314]{ziv}. The most well known convex curve is the \emph{moment curve}
\begin{equation*}
\bigl\{(t, t^2, \ldots, t^d) \bigm| t\in \R \bigr\}.
\end{equation*}
If $d$ is even, then a convex curve in $\R^d$ can be open (like the moment curve) or closed, like the \emph{Carath\'eodory curve}~\cite[p.~75]{ziegler}
\begin{equation*}
\bigl\{ (\sin t, \cos t, \sin 2t, \cos 2t, \ldots, \sin \tfrac{d}{2}t, \cos \tfrac{d}{2}t) \bigm| 0\le t<2\pi \bigr\}.
\end{equation*}
For $d$ even it is convenient to think of the curve as being always closed, by pretending, if necessary, that the curve's two endpoints are joined together. In other words, for $d$ odd we consider the points on the curve to be linearly ordered, while for $d$ even we consider the points to be circularly ordered.

\paragraph{Weak epsilon-nets.} The result by Alon et al.~\cite{AKNSS} on weak epsilon-nets mentioned in the introduction generalizes as follows: If $P$ is a finite point set that lies on a convex curve $X\subset \R^d$, then $P$ has a weak $\eps$-net of size at most $(1/\eps)2^{\mathrm{poly}(\alpha(1/\eps))}$.

Note that this bound is barely superlinear in $1/\eps$, and it is much stronger than the general bound for weak $\eps$-nets in $\R^d$.

\subsection{Generalization of our result} Theorem~\ref{thm_hd_conv}(\emph{b}) generalizes as follows:

\begin{theorem}\label{thm_hd_conv_Rd}
Let $X$ be a convex curve in $\R^d$, and let $\mathcal S$ be a family of convex bodies in $\R^d$, with the property that among every $p$ elements of $\mathcal S$, two meet at a point of $X$.
Then, there exists a point $z\in \R^d$ intersecting a $\Omega(p^{-j})$-fraction of the elements of $\mathcal S$, for some constant $j = d^2/2 + O(d)$.

As a result, $\mathcal S$ has a transversal of size $O(p^{j'})$ for some constant $j' = d^3/2 + O(d^2)$.
\end{theorem}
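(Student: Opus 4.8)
The plan is to follow the same three steps as the planar proof, changing only what is genuinely dimension-dependent. Since we are generalizing only case~(\emph{b}), we first apply Tur\'an's theorem exactly as in Lemma~\ref{lemma_many_pairs} — that lemma and its proof are dimension-free — to conclude that a $\gamma$-fraction of the pairs of $\mathcal S$ meet on $X$ with $\gamma\ge 1/p$. We use the conventions of Section~\ref{sec_Rd}: for $d$ even the curve is closed and the points $p_{ij}\in X$ are ordered circularly, for $d$ odd it is open and they are ordered linearly; as in the planar proof we insert a separator $y_i$ between consecutive sorted points $q_{i-1},q_i$.

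The one genuinely new ingredient is an $\R^d$ analogue of Observation~\ref{obs_pierce}, the \emph{piercing lemma}: there is a constant $\mu=\mu(d)$ with $\mu\sim\binom{d+1}{2}=d^2/2+O(d)$ such that, whenever $y_1,\dots,y_\mu\in X$ are given in order and a convex body $S$ meets each of the $\mu$ arcs $(y_1,y_2),(y_2,y_3),\dots,(y_\mu,y_1)$ (the last one understood cyclically when $d$ is even, the two unbounded end-arcs of $X$ playing the analogous role when $d$ is odd), then $S$ contains a point $z=z(y_1,\dots,y_\mu)$ depending only on the $y_i$. For $d=2$ this is Observation~\ref{obs_pierce}, with $\mu=4$. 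The proof for general $d$ is a short Helly-plus-pigeonhole argument. Define $z$ to be any point of $\bigcap_J\conv\{y_j : j\in J\}$, the intersection ranging over all $J\subseteq\{1,\dots,\mu\}$ with $|J|\ge\mu-\lfloor d/2\rfloor$. This intersection is nonempty: any $d+1$ of these hulls have a common vertex, since the corresponding index sets jointly omit at most $(d+1)\lfloor d/2\rfloor\le\binom{d+1}{2}<\mu$ indices, and then Helly's theorem in $\R^d$ applies. Now if $S$ meets all $\mu$ arcs but $z\notin S$, take a hyperplane $h$ separating $z$ from $S$; since $X$ crosses $h$ at most $d$ times, the part of $X$ strictly on the $z$-side of $h$ is a union of at most $\lfloor d/2\rfloor$ arcs, and no such arc can contain two consecutive $y_j$ (the sub-arc of $X$ between them contains a point of $S$, hence is not entirely on the $z$-side). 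So at most $\lfloor d/2\rfloor$ of the $y_j$ lie strictly on the $z$-side; letting $J$ be the indices of the rest, $\{y_j : j\in J\}$ lies on the $S$-side and $|J|\ge\mu-\lfloor d/2\rfloor$, so $z\notin\conv\{y_j : j\in J\}$, contradicting the choice of $z$. (For $d$ odd one replaces $\lfloor d/2\rfloor$ by $\lceil(d+1)/2\rceil$ to accommodate the open ends, which changes $\mu$ only by $O(d)$.)

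With the piercing lemma available, the probabilistic argument of Lemma~\ref{lemma_frac_H_type}(\emph{b}) transcribes with ``$\mu$-tuple of separators'' in place of ``quadruple'', at the price of constants depending on $d$. Put $\alpha=c_d\gamma$, and call a color \emph{spread out} if it has $\mu$ occurrences that are pairwise at distance $\ge\alpha N$. By the product-of-gaps estimate of Observation~\ref{obs_random}, a uniformly random $\mu$-tuple of separators pierces a given spread-out color — each of its $\mu$ arcs then containing an occurrence — with probability at least $\mu!\,\alpha^{\mu-1}(1-(\mu-1)\alpha)=\Omega_d(\gamma^{\mu-1})$, and by the piercing lemma the point $z$ then lies in the corresponding body. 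The analogue of Observation~\ref{obs_3intervals} shows that a non-spread-out color is confined to at most $\mu-1$ intervals of length $O_d(\alpha N)$, and feeding this into the argument of Observation~\ref{obs_graph} and Lemma~\ref{lemma_not_spread_bd} (now with $(\mu-1)kn$ vertices) bounds the number of non-spread-out colors by $(1-\Omega_d(\gamma))n$. Hence the expected number of colors pierced by a random $\mu$-tuple is $\Omega_d(\gamma^{\mu-1})\cdot\Omega_d(\gamma n)=\Omega_d(\gamma^{\mu}n)$; taking the best $\mu$-tuple and using $\gamma\ge1/p$ gives a point $z\in\R^d$ in $\Omega(p^{-\mu}n)$ of the bodies of $\mathcal S$. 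This is the first assertion, with $j=\mu=d^2/2+O(d)$.

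The last step is the Alon--Kleitman machinery, applied verbatim as in the proof of Theorem~\ref{thm_hd_conv}: LP duality gives $\tau^*(\mathcal S)\le 1/\eps$ for $\eps=\Omega(p^{-j})$, one blows the optimal fractional transversal up to an unweighted point set $N'\subset\R^d$ in which every body of $\mathcal S$ contains an $\eps$-fraction of the points, and a weak $\eps$-net for $N'$ is the required transversal. The only difference from the planar case is that $N'$ is an arbitrary point set of $\R^d$, not one lying on a convex curve, so one must use the general bound $f_d(\eps)=O(\eps^{-d}\,\mathrm{polylog}(1/\eps))$; this gives a transversal of size $O(p^{jd}\,\mathrm{polylog}\,p)=O(p^{j'})$ with $j'=jd=d^3/2+O(d^2)$. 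In short, the only substantive new step is the piercing lemma; I expect the main things to get right to be its (clean) Helly-type proof and the organization of the case $d$ odd together with the $o(n)$ error terms, everything else being a mechanical copy of the planar argument.
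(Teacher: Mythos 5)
Your proposal is correct and follows the same skeleton as the paper's argument (Tur\'an, then the separator-based fractional-Helly step with $j$-tuples instead of quadruples, then Alon--Kleitman with a general weak $\eps$-net), but it differs in one substantive place: where the paper simply invokes Lemma~\ref{lemma_jtuple}, quoted from Alon et al.~\cite{AKNSS}, you prove the needed piercing lemma from scratch. Your Helly-plus-separation argument is essentially right, and it is pleasant that it reproduces the paper's exact constant for even $d$: your nonemptiness condition $(d+1)\lfloor d/2\rfloor<\mu$ forces $\mu\ge\binom{d+1}{2}+1=(d^2+d+2)/2$, which is precisely the $j$ of equation~(\ref{eq_j}); for odd $d$ your bookkeeping with the end-arcs gives a slightly larger value than the paper's $(d^2+1)/2$, which is harmless since the theorem only claims $j=d^2/2+O(d)$. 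One small imprecision to patch: the assertion that ``the part of $X$ strictly on the $z$-side of $h$ is a union of at most $\lfloor d/2\rfloor$ arcs'' is not literally true if $X$ meets $h$ tangentially (a curve touching $h$ from the $z$-side at $k\le d$ points can have $k$ such components). What saves you is exactly the structure you already use: any component containing a separator is flanked, on both sides along the curve, by points of $S\cap X$ lying strictly on the other side of $h$, so each such component forces two genuine sign changes and hence two distinct points of $X\cap h$; since $|X\cap h|\le d$, at most $\lfloor d/2\rfloor$ separators lie strictly on the $z$-side, which is all your argument needs. With that repair, your version has the advantage of being self-contained (it re-derives the cited AKNSS lemma), at the cost of a marginally worse constant in odd dimensions; the rest of your write-up (the $\mu!\,\alpha^{\mu-1}(1-(\mu-1)\alpha)$ piercing probability, the $(\mu-1)$-interval characterization of non-spread-out colors, the $(1-\Omega_d(\gamma))n$ bound, and the final $f_d(\eps)=O(\eps^{-d}\mathrm{polylog}(1/\eps))$ step giving $j'=jd+O(1)=d^3/2+O(d^2)$) matches the paper's intended completion at least at the level of detail the paper itself provides.
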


For comparison, the bound for $\HD_d(p, d+1)$ obtained by Alon and Kleitman is only $O(p^{j''})$ for $j'' = d^2 + O(d)$.

The proof of Theorem~\ref{thm_hd_conv_Rd} proceeds like the proof of Theorem~\ref{thm_hd_conv}(\emph{b}), with the following main changes:

Instead of Observation~\ref{obs_pierce} we use the following Lemma:

\begin{lemma}[Alon et al.~\cite{AKNSS}\footnote{Alon et al.~state this lemma specifically for the moment curve, but it is true for any convex curve.}]\label{lemma_jtuple}
Let $X$ be a convex curve in $\R^d$, and define
\begin{equation}\label{eq_j}
j=\begin{cases}(d^2+d+2)/2,&d \text{ even};\\ (d^2+1)/2,& d \text{ odd}.\end{cases}
\end{equation}
Let $A$ be a set of $j$ points on $X$. Note that $A$ partitions $X$ into $j+1$ intervals if $d$ is odd, or $j$ intervals if $d$ is even.

Then, there exists a point $p\in\conv(A)$ with the following property: For every set $B\subset X$ that contains a point in each of the above-mentioned intervals, we have $p\in\conv(B)$.
\end{lemma}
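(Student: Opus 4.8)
The statement is invariant under affine maps and depends only on the order type of $A$ on $X$. Since a hyperplane meets a convex curve in at most $d$ points, any $d+1$ points of $X$ are affinely independent and this order type is forced, so I would first reduce to the case where $X$ is the moment curve $\gamma(t)=(t,t^2,\dots,t^d)$ and $A=\{\gamma(u_1),\dots,\gamma(u_j)\}$ with $u_1<\dots<u_j$ (cyclically, if $d$ is even). The only property of $X$ I will use is that a hyperplane meets it in at most $d$ points; equivalently, an affine functional restricts on $X$ to a univariate polynomial of degree at most $d$. Set $r=\lfloor d/2\rfloor+1$; a short computation shows that the stated value satisfies $j=r(d+1)-d$.

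The heart of the matter is to produce a point $p\in\conv(A)$ such that $p\in\conv(A\setminus T)$ for every set $T$ of at most $r-1$ of the vertices; equivalently, such that every open halfspace containing $p$ contains at least $r$ vertices of $A$. I would obtain such a $p$ by a ``crossing'' construction: partition $A$ round-robin into $r$ groups $A_1,\dots,A_r$ (assigning $\gamma(u_k)$ to group $((k-1)\bmod r)+1$), so that $\sum_\ell|A_\ell|=j$ and, by the identity above, the spans of the groups are affine flats whose codimensions sum to $d$; then take $p\in\bigcap_{\ell=1}^r\conv(A_\ell)$. Any open halfspace $H$ containing $p$ meets each $\conv(A_\ell)$, hence contains a vertex of each of the $r$ pairwise disjoint groups, so $|A\cap H|\ge r$, which is exactly the required property. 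For $r=2$ (hence $d\in\{2,3\}$) the existence of the common point $p$ is the elementary fact that two simplices spanned by interleaved point sets on a convex curve must meet: a hyperplane separating $A_1$ from $A_2$ would be crossed by $X$ at least $j-1=d+1$ times, which is impossible.

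With $p$ in hand I would finish by contradiction. Suppose some set $B$ meeting each of the $j$ (resp.\ $j+1$) intervals cut off by $A$ does not contain $p$ in its convex hull, and let $\{f=c\}$ be a hyperplane strictly separating $p$ from $B$, say $f(p)>c$ and $B\subset\{f\le c\}$. On $X$ the function $f$ is a polynomial of degree at most $d$, so $\{f>c\}\cap X$ is a union of at most $\lceil d/2\rceil$ arcs. Because $B\subset\{f\le c\}$ meets every interval, no arc of $\{f>c\}\cap X$ can contain an entire interval $[a_m,a_{m+1}]$; hence each such arc spans at most two consecutive intervals and therefore carries at most one vertex of $A$. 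For $d$ even this already gives at most $d/2=r-1$ ``high'' vertices; for $d$ odd a small refinement — at the far end of $X$, where the degree-$d$ polynomial $f$ runs off to $\pm\infty$, lies an unbounded interval that $B$ must reach, so the outermost sign region of $\{f>c\}$ carries no vertex — brings the bound down to $(d-1)/2=r-1$ there as well. Taking $T$ to be the set of high vertices, $|T|\le r-1$, so $p\in\conv(A\setminus T)\subset\{f\le c\}$, giving $f(p)\le c$, a contradiction.

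The routine parts are the reduction to the moment curve, the arithmetic with $r$ and $j$, and the arc-counting that bounds the number of high vertices by $r-1$. The main obstacle is the crossing construction for $r\ge 3$ — i.e., showing that the $r$ round-robin groups on a convex curve have a common point in their convex hulls — which is where the precise value of $j$ is genuinely used and which, unlike the $r=2$ case, cannot be reduced to the non-existence of a single separating hyperplane. I expect this to go through by induction on $d$, projecting $X$ to a convex curve in a lower dimension and tracking how the groups and the intervals transform; carrying out that bookkeeping cleanly is the delicate point.
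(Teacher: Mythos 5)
The paper itself gives no proof of this lemma (it is imported from Alon et al.~\cite{AKNSS}, with a footnote that it extends from the moment curve to arbitrary convex curves), so there is no in-paper argument to compare against; judged on its own, your plan has the right overall shape: find a point $p\in\conv(A)$ that remains in $\conv(A\setminus T)$ for every $T$ with $|T|\le r-1$ (where $r=\lfloor d/2\rfloor+1$), then show that a hyperplane strictly separating $p$ from $\conv(B)$ can have at most $r-1$ vertices of $A$ on its $p$-side. But the step you yourself flag is a genuine gap: you never prove that the $r$ round-robin groups have a common point, and for $r\ge 3$ (i.e.\ $d\ge 4$) the claim that this particular partition of moment-curve points is a Tverberg partition is nontrivial; the sketched ``induction on $d$ by projection'' is not carried out. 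The gap is, however, immediately repairable, because nothing in your final step uses the round-robin structure --- you only need that every $T$ with $|T|\le r-1$ misses some group. Since $j=r(d+1)-d=(r-1)(d+1)+1$ is exactly the Tverberg number for $r$ parts in $\R^d$, Tverberg's theorem applied to $A$ gives \emph{some} partition into $r$ parts whose convex hulls share a point $p$, and any such $p$ has the tolerance property you need. Replacing your construction by this appeal to Tverberg closes the hole.

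Two smaller repairs are also needed. First, the opening ``reduction to the moment curve'' is not legitimate for a general convex curve (such a curve is not an affine image of the moment curve, and the statement is not purely order-theoretic as you use it); it is also unnecessary, since the only property required is that a hyperplane meets $X$ in at most $d$ points. Second, the arc count is shaky as written: on an open curve $\{f>c\}\cap X$ can have up to $\lfloor d/2\rfloor+1$ components (both unbounded ends positive when $\deg$ is even), and on a general convex curve tangential intersections break the naive sign-region count altogether. The clean fix: replace $B$ by a finite transversal with one point per interval, so strict separation gives $f<c$ on $B$; then between any two (cyclically, for $d$ even) consecutive vertices of $A$ lying in $\{f>c\}$ there is a whole interval of the partition, hence a point of $B$ with $f<c$, which forces two distinct points of $X\cap\{f=c\}$ in that gap. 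Counting around the curve gives at most $\lfloor d/2\rfloor=r-1$ ``high'' vertices in all cases (this absorbs your separate $d$-odd refinement), and with $T$ the set of high vertices, $p\in\conv(A\setminus T)\subset\{f\le c\}$ yields the desired contradiction.
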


In our application of the Lemma, $A$ plays the role of the separator points, and $B$ plays the role of the points colored with color $i$.
Hence, instead of quadruples of separator points, we consider $j$-tuples.

A color $i$ is now \emph{spread out} if there exist $j+1$ points for $d$ odd, or $j$ points for $d$ even, colored with color $i$, such that all the pairwise between these points are at least $\alpha N$. Then, exactly one of the following is true: Either color $i$ is spread out, or all instances of color $i$ occur in at most $j$ intervals for $d$ odd, or $j-1$ intervals for $d$ even, each of length at most $\alpha N$.

The probability of a random $j$-tuple of separators piercing a spread-out color is now $\Omega(\alpha^j)$ for $d$ odd, and $\Omega(\alpha^{j-1})$ for $d$ even. Instead of setting $\alpha = \gamma/300$, we set $\alpha = c_d \gamma$ for a small enough positive constant $c_d$.

The remaining details are left to the reader.

\section{Conclusion}

Figure~\ref{figs} (right) shows a family of seven convex sets, every pair of which meet at a point of the unit circle, that requires a transversal of size $3$. The points $a, \ldots, g$ are uniformly spaced along the unit circle, except for $f$, which has been moved a bit towards $e$. The seven sets are the convex hulls of $abc$, $cde$, $efa$, $bdf$, $adg$, $beg$, $cfg$, respectively.  If $2$ points were enough to pierce all the triangles, then at least one point must intersect $4$ triangles. There are three regions which are overlaps of $4$ triangles (the darkest shades of gray in the figure). But in each case, there are three triangles left that cannot be intersected with a single point. 

We believe that the true bound for this problem is less than $10$.

\end{document}